\newtheorem{Lemma}{Lemma}
\newtheorem{Definition}[Lemma]{Definition}
\newcommand{\norm}[1]{\left\| #1 \right\|}
\DeclareMathOperator*{\argmin}{\arg\min}
\begin{document}
\title{Training-set-free two-stage deep learning for Spectroscopic data de-noising\\
\vspace{0.2cm}
- Supplemental Material -}
\author{Dongchen Huang}
\affiliation{Beijing National Laboratory for Condensed Matter Physics and Institute
of Physics, Chinese Academy of Sciences, Beijing 100190, China}
\affiliation{University of Chinese Academy of Sciences, Beijing 100049, China}
\author{Junde Liu}
\affiliation{Beijing National Laboratory for Condensed Matter Physics and Institute
of Physics, Chinese Academy of Sciences, Beijing 100190, China}
\affiliation{University of Chinese Academy of Sciences, Beijing 100049, China}
\author{Tian Qian}
\email[]{tqian@iphy.ac.cn}
\affiliation{Beijing National Laboratory for Condensed Matter Physics and Institute
of Physics, Chinese Academy of Sciences, Beijing 100190, China}
\affiliation{School of Physical Sciences, University of Chinese Academy of Sciences, Beijing 100049, China}
\affiliation{Songshan Lake Materials Laboratory, Dongguan, Guangdong 523808, China}
\author{Hongming Weng}
\email[]{hmweng@iphy.ac.cn}
\affiliation{Beijing National Laboratory for Condensed Matter Physics and Institute
of Physics, Chinese Academy of Sciences, Beijing 100190, China}
\affiliation{School of Physical Sciences, University of Chinese Academy of Sciences, Beijing 100049, China}
\affiliation{Songshan Lake Materials Laboratory, Dongguan, Guangdong 523808, China}
\maketitle

\subsection{I. More results about noise removing}\label{sec:S1}

The Fig. \ref{Fig: More results} displays additional examples of ARPES spectra in which the noise has been successfully eliminated based on the correlation information within the data. It is evident that all the de-noised spectra exhibit high signal-to-noise (SNR) levels, resulting in clearer energy band structures. This result shows the effectiveness of our method in extracting noise for different kinds of spectra solely through the self-correlation information of a single spectrum. Consequently, it further confirms the universality and versatility of our approach.

\begin{figure*}[ht]	
	\centering
	\includegraphics[width=0.95\textwidth]{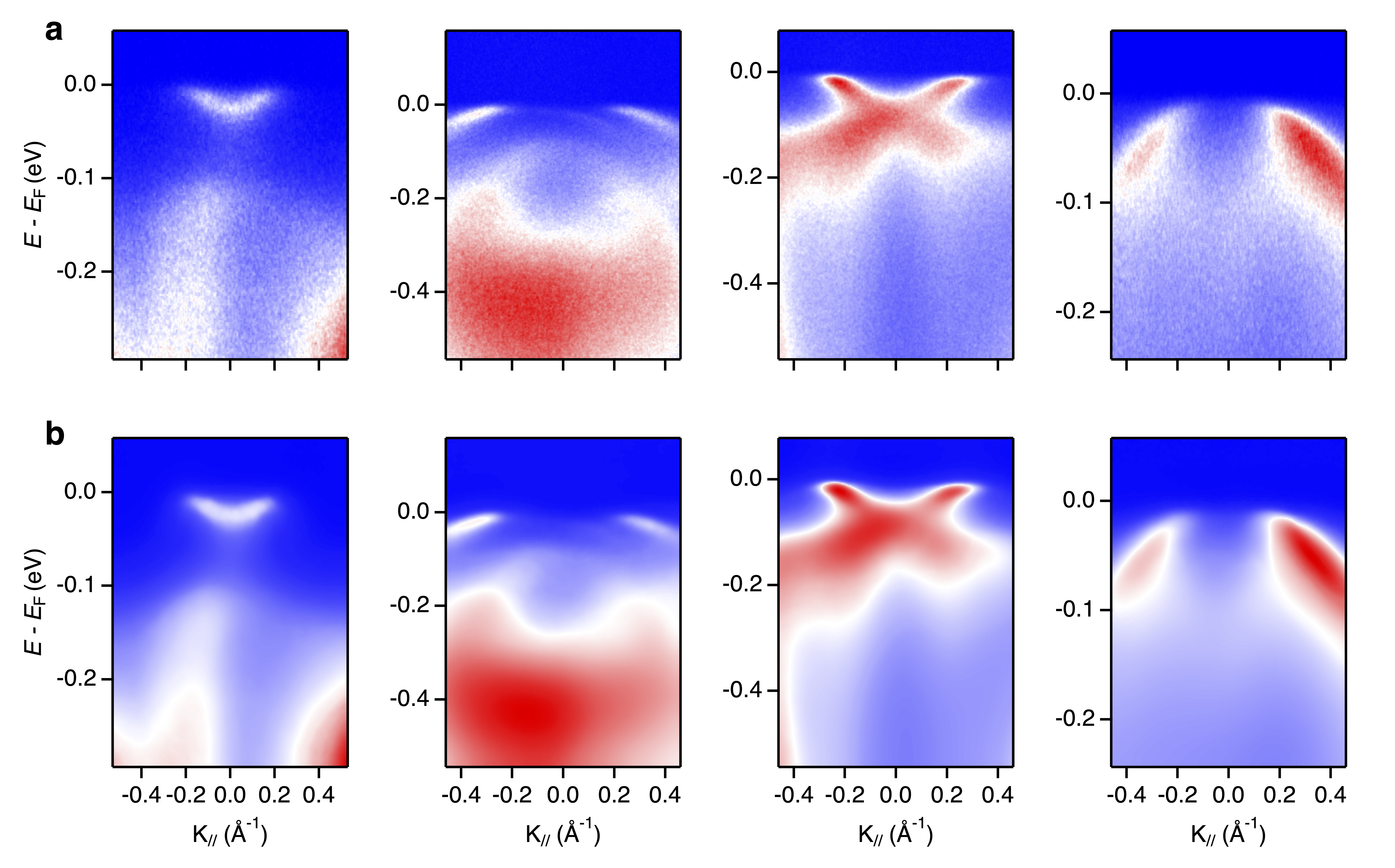}			
	\caption{More results of de-noised ARPES spectra. (a) The noise corrupted raw data. (b) The corresponding de-noised data based on the original data in panel (a). }
	\label{Fig: More results}
\end{figure*}

\subsection{II. Solution of principle component pursuit}\label{sec:S2}
In this section, we show the optimization techniques for principle component pursuit (PCP) proposed in the main manuscript. 

The PCP problem aims to find a ``low-rank + sparse" decomposition for a given matrix. Mathematically, given a matrix $I$, we want to find a decomposition satisfying $I=L+S$ where $L$ is a low-rank matrix and the matrix $S$ is sparse. The sparsity here means that the matrix $S$ only has few non-zero elements and the other elements are zero. Under some mild conditions \cite{Candes2011}, the matrix $L$ and $S$ can be obtain by solving the so-called PCP problem:
\begin{equation}
    \min_{L,S} \norm{L}_* + \lambda \norm{S}_1 \quad   \text{s.t. } I = L+S,
    \label{Eq:PCP}
\end{equation}
where $\lambda>0$ is a parameter chosen by hand, $\norm{L}_*$ is the nuclear norm defined as the sum of all the singular values of $L$ and $\norm{S}_1$ is the so-called $\ell_1$ norm defined as the summation of the absolute values of all the entries in $S$.

When solving the problem \eqref{Eq:PCP}, simultaneously iterative updating $L$ and $S$ via gradient descent can be time-consuming and not easy. Note that the function is the summation of two separable terms: one only depends on $L$ and the other rely on $S$. Thus, we had better solve this problem by alternating direction method of multipliers (ADMM). This algorithm transfers the original problem into two easy problems and alternatively updates $L$ and $S$.

The augmented Lagrange with respect to Eq. \eqref{Eq:PCP} can be written as:
\begin{equation}
    \mathcal{L}(L,S,\Lambda,\mu) = \norm{L}_* + \lambda \norm{S}_1 + \langle \Lambda, L+S -I \rangle + \frac{\mu}{2}  \norm{L+S-I}_2^2,
\end{equation}
where $\langle\cdot\rangle$ represents the inner product, $\Lambda$ is the Lagrange multiplier in the matrix form and $\mu >0$ is a parameter with respect to the equality constrain.

Let $L_k,S_k$ and $\Lambda_k$ be the the value of $L, S$ and $\Lambda$ in the $k$-th iteration respectively, then the update rule of $L$ can be written as:
\begin{equation}
    \begin{split}
        L_{k+1} &= \argmin_L \mathcal{L}(L,S_{k},\Lambda_k,\mu) = \norm{L}_* + \lambda \norm{S_k}_1 + \langle \Lambda_k, L+S_k-I\rangle + \frac{\mu}{2}\norm{L+S_k-I}_2^2 \\
        & = \argmin_L  \norm{L}_* + \frac{\mu}{2} (2\langle \frac{\Lambda_k}{\mu}, L+S_k-I\rangle   + \norm{L+S_k-I}_2^2)\\
        & = \argmin_L \norm{L}_* + \frac{\mu}{2} \norm{L+S_k-I + \frac{\Lambda_k}{\mu}}_F^2 + \varphi(S_k,\Lambda_k) \\
        & = \argmin_L \norm{L}_* + \frac{\mu}{2} \norm{L+S_k-I + \frac{\Lambda_k}{\mu}}_F^2 \\
        & = D_{1/\mu}\left(I-S_k-\frac{\Lambda_k}{\mu}\right)
    \end{split}
\end{equation}
where $\varphi(S_k,\Lambda_k)$ is the remaining terms which do not contain $L_k$ and can be ignored in the optimization problem and $D_{1/\mu}[\cdot]$ is the singular value threshold function and can be derived via proximal mapping \cite{Parikh2014}. For any given matrix $W$, and function $D_{1/\mu}[W]$ takes the form:
\begin{equation}
    D_{1/\mu}[W] = U T_{1/\mu}(\Sigma) V^T,
\end{equation}
where $W=U\Sigma V^T$ is the singular value decomposition (SVD) of $W$, and the function $T_{1/\mu}(\cdot)$ is the so-called soft-threshold function \cite{Parikh2014} which yields:
\begin{equation}
    T_{1/\mu}(x) = \left \{ 
    \begin{array}{lr}
        x-1/\mu & x>1/\mu \\
        0 & -1/\mu <x < 1/\mu \\ 
        x + 1/\mu & x< -1/\mu.
    \end{array}
    \right.
\end{equation}

The update rule of matrix $S$ can be derived as follows:
\begin{equation}
    \begin{split}
        S_{k+1} &= \argmin_S \norm{L}_* + \lambda \norm{S}_1 + \langle \Lambda, L_k+S-I\rangle + \frac{\mu}{2}\norm{L_k+S-I}_2^2 \\
        & = \argmin_S \lambda \norm{S}_1 + \frac{\mu}{2} (2\langle \frac{\Lambda_k}{\mu}, L_k+S_k-I\rangle   + \norm{L_k+S_k-I}_2^2) \\
        & = \argmin_S \lambda \norm{S}_1 + \frac{\mu}{2} \norm{S_k -(I-L_k-\frac{\Lambda_k}{\mu})}_F^2 +\varphi(L_k,\Lambda_k)  \\
        & = \argmin_S \lambda \norm{S}_1 + \frac{\mu}{2} \norm{S_k -(I-L_k-\frac{\Lambda_k}{\mu})}_F^2 \\
        & = T_{\mu^{-1}\lambda} \left(I-L_k-\frac{\Lambda_k}{\mu}\right),
    \end{split}
\end{equation}
where $\varphi(L_k,\Lambda_k)$ is the remaining terms not containing $S_k$ and $T_{1/\mu}[\cdot]$ denotes the soft-threshold function mentioned above with parameter $\mu^{-1} \lambda$.

The update rule for Lagrange multiplier can be obtained directly:
\begin{equation}
    \Lambda_{k+1} = \Lambda_k + \frac{\partial L}{\partial \Lambda}\Big|_{\Lambda = \Lambda_k}  = \Lambda_k + \mu (L_k+S_k-I).
\end{equation}

\subsection{III. Proof of strict saddle point property}\label{sec:S3}
In this section, we first introduce the definition of strict saddle point and then derivation the strict saddle point property of the linear model mentioned in the main text:
\begin{equation}
	\min_{U,g,h} \mathcal{L}'(U,g,h) \equiv \frac{1}{4} \norm{UU^T + g\circ g - h \circ h -I}_2^2,
	\label{Eq:Simplified model}
\end{equation}
where $U \in \mathbb{R}^{n\times r}$ and $g,h \in \mathbb{R}^n$

The definition of strict saddle point can be defined as follows:
\begin{Definition}[]
    Let $f:\mathbb{R}^n \rightarrow \mathbb{R}$ be a twice continuously differentiable function.
    \begin{enumerate}
        \item A point $x^*$ is called a critical point if $x^*$ satisfies $\nabla f(x^*)=0$.
        \item A point $x^*$ is a strict saddle point of function $f$ if there exists a direction $d$ satisfying $d^T \nabla^2 f(x^*) d < 0$. Local maximizer is a special case of strict saddle point.
    \end{enumerate}    
\end{Definition}

For self-completeness, we restate the properties about critical point here:
\begin{Lemma}
    Let $r=UU^T + g\circ g - h \circ h -I$ be the residual, a critical point of $\mathcal{L}'$ is either a global minimal $(r=0)$ or noiseless $(g=h=0)$.
    \label{Lem:Critical Points}
\end{Lemma}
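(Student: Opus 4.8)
The plan is to read off the stationary points directly from the first-order optimality conditions and then combine the three resulting equations. First I would compute the partial gradients of $\mathcal{L}'(U,g,h)=\tfrac14\norm{r}_2^2$ with $r=UU^T+g\circ g-h\circ h-I$. Since $\mathcal{L}'$ is a smooth composition and $r$ is symmetric, the chain rule gives $\partial_U\mathcal{L}'=rU$; because $g$ and $h$ enter only through the Hadamard squares $g\circ g$ and $h\circ h$, one gets $\partial_g\mathcal{L}'=r\circ g$ and $\partial_h\mathcal{L}'=-r\circ h$ up to positive constants (the product acting on the entries through which $g,h$ couple into $r$). Setting all three to zero yields the stationarity system
\begin{equation}
rU=0,\qquad r\circ g=0,\qquad r\circ h=0.
\label{Eq:stationarity}
\end{equation}

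Next I would extract scalar identities that separate the two regimes. Multiplying the second relation in \eqref{Eq:stationarity} entrywise by $g$ and the third by $h$ and subtracting gives $r\circ(g\circ g-h\circ h)=0$, hence $\langle r,\,g\circ g-h\circ h\rangle=0$; the first relation gives $\langle r,UU^T\rangle=\mathrm{tr}(U^T rU)=0$ since $rU=0$. Using $UU^T+g\circ g-h\circ h=r+I$ and adding these two facts produces the single constraint $\norm{r}_2^2=-\langle r,I\rangle$. This identity is one ingredient, but it does not by itself force $r=0$, so it must be combined with the finer support information carried by the entrywise conditions.

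The decisive and hardest step is to upgrade the entrywise conditions in \eqref{Eq:stationarity} to the global dichotomy. My approach would be a case analysis on the support of the noise: at every index where $g$ or $h$ is nonzero, the relations $r\circ g=0$ and $r\circ h=0$ force the corresponding (diagonal) entry of $r$ to vanish, so the noise can only be active where $r$ is locally zero; I would then exploit the kernel condition $rU=0$, i.e. $\mathrm{range}(U)\subseteq\ker r$, together with the positive semidefiniteness of $UU^T$, to propagate these local vanishings across the whole matrix and rule out a residual that is partly nonzero yet compatible with active noise. I expect this reconciliation to be the main obstacle, precisely because the sparse (entrywise) stationarity of $g,h$ and the low-rank (subspace) stationarity of $U$ constrain $r$ on complementary structures, and one must show these constraints are jointly feasible only in the two claimed regimes, the perfect fit $r=0$ and the noiseless branch $g=h=0$.
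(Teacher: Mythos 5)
Your stationarity system and the identity $\norm{r}_2^2=-\langle r,I\rangle$ are correct (modulo one detail: $r$ need not be symmetric, so $U$-stationarity reads $(r+r^T)U=0$ rather than $rU=0$), and up to that point you follow the same route as the paper. The genuine gap is the step you yourself flag as ``decisive and hardest'': upgrading the entrywise conditions $r\circ g=0$, $r\circ h=0$ to the global dichotomy ``$r=0$ or $g=h=0$''. That step cannot be completed, because the global dichotomy is false. Take $n=2$, data $I_{11}=I_{22}=1$, $I_{12}=I_{21}=0$, and the point $U=0$, $h=0$, $g_{22}=1$ with all other entries of $g$ zero. Then
\begin{equation}
r=g\circ g-I=\begin{pmatrix}-1 & 0\\ 0 & 0\end{pmatrix},
\end{equation}
and all three stationarity conditions hold ($U=0$ annihilates the $U$-gradient, and $r\circ g=r\circ h=0$ entry by entry), yet $r\neq 0$, $g\neq 0$, and the point is not a global minimum ($\mathcal{L}'=1/4$). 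In particular, no propagation argument based on $\mathrm{range}(U)\subseteq\ker r$ and positive semidefiniteness of $UU^T$ can close the gap: mixed critical points, with the noise active exactly where the residual vanishes, really do exist, so the reconciliation you anticipated as the main obstacle is impossible rather than merely hard.

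What makes the lemma usable is the entrywise reading, which your stationarity system already delivers with no further work: at every entry where $r$ is nonzero, both $g$ and $h$ must vanish there; hence at any critical point that is not a global minimum there exists an index $i$ with $r^i\neq 0$ and $g^i=h^i=0$. This is in effect all the paper proves --- its proof is the one-line observation that $\nabla_g\mathcal{L}'=r\circ g=0$ and $\nabla_h\mathcal{L}'=-r\circ h=0$, stated somewhat loosely as ``$g=h=0$'' --- and it is the only form of the lemma invoked afterwards: both the rank-$1$ and rank-$r$ strict-saddle constructions pick a single index with $r^i\neq 0$, $g^i=h^i=0$ and place the negative-curvature direction on that coordinate of $g$ or $h$. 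So the repair is not to find the propagation argument you sketched, but to weaken the conclusion to the per-entry dichotomy; with that reformulation, your proof is already complete after your first display.
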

\begin{proof}
    The loss function $\mathcal{L}'$ can be written as:
    \begin{equation}
        \mathcal{L}' = \frac{1}{4} \norm{\sum_{i=1}^r u_i u_i^T + g\circ g - h\circ h -I}_2^2,
    \end{equation}
    where $u_i$ denotes the $i$-th column of $U$.

    We introduce $M \equiv - (g\circ g + h\circ h -I)$ for simplicity, and then the derivation with respect to $u_i(i=1,\cdots,r$ can be calculated directly:
    \begin{equation}
        \nabla_{u_i} \mathcal{L}' = \frac{1}{2}(u_iu_i^T + \sum_{j\neq i}^r u_j u_j^T - M)u_i + \frac{1}{2} (u_iu_i^T + \sum_{j\neq i}^r u_j u_j^T - M^T)u_i.
    \end{equation}
    Similarly, the gradient with respect to noise vector $g$ and $h$ has the form:
    \begin{equation}
    \begin{split}
        \nabla_g \mathcal{L}’ &=  r \circ g \\
        \nabla_h \mathcal{L}’ &= - r \circ h ,
    \end{split}
\end{equation}
where the operator $\circ$ is element-wise Hadamard product.

For a not-global-minimal critical point, we have $r\neq 0$. Then $\nabla_g \mathcal{L}'=0$ and $\nabla_h \mathcal{L}'=0$ give $g=h=0$.

\end{proof}

\paragraph{Rank-$1$ case}
As a informative example, we first prove the strict saddle point property for a easy rank-$1$ case.

When the rank of $U$ is $1$, the simplest problem is the two-dimensional case where $u=(u_1,u_2),g=(g_{11},g_{12},g_{21},g_{22}$ and $h=(h_{11},h_{12},h_{21},h_{22})$. Thus, the loss function \eqref{Eq:Simplified model} can be written as:
\begin{equation}
    \mathcal{L}' = \frac{1}{4}\sum_{i,j=1}^2 (u_iu_j + g_{ij}^2 - h_{ij}^2 - I_{ij})^2,
\end{equation}
when we have the gradient:
\begin{equation}
    \begin{split}
        \nabla_u \mathcal{L} &= \left(\begin{pmatrix}
            u_1^2 & u_1u_2  \\ u_1u_2 & u_2^2 
        \end{pmatrix} + \begin{pmatrix}
            g_{11}^2  & \frac{1}{2}(g_{12}^2 + g_{21}^2)  \\ \frac{1}{2}(g_{21}^2 + g_{21}^2) & g_{22}^2
        \end{pmatrix} - \begin{pmatrix}
            h_{11}^2  & \frac{1}{2}(h_{12}^2 + h_{21}^2)  \\ \frac{1}{2}(h_{12}^2 + h_{21}^2) & h_{22}^2
        \end{pmatrix} -\frac{1}{2}(I+I^T)  \right) \times \begin{pmatrix}
            u_1 \\ u_2 
        \end{pmatrix} \\ 
        & = \begin{pmatrix}
            u_1^3 + u_1u_2^2 \\ u_1^2 u_2 + u_2^3  
        \end{pmatrix} + \begin{pmatrix}
             (g_{11}^2-h_{11}^2 ) u_1 + \frac{1}{2}(g_{12}^2 + g_{21}^2 - h_{12}^2-h_{21}^2) u_2 \\ \frac{1}{2}(g_{12}^2 + g_{21}^2 - h_{12}^2-h_{21}^2) u_1 + (g_{22}^2 - h_{22}^2 ) u_2 
        \end{pmatrix} - \frac{1}{2}(I+I^T)u \\ 
        \frac{\partial \mathcal{L}}{\partial g} &= \left(\begin{pmatrix}
            u_1^2 & u_1u_2  \\ u_1u_2 & u_2^2 
        \end{pmatrix}  - \begin{pmatrix}
            h_{11}^2  & h_{12}^2  \\ h_{21}^2 & h_{22}^2
        \end{pmatrix} -I  \right)\circ g  + \begin{pmatrix}
            g_{11}^3  & g_{12}^3  \\ g_{21}^3 & g_{22}^3
        \end{pmatrix} \\
        \frac{\partial \mathcal{L}}{\partial h} &= -\left( \begin{pmatrix}
            u_1^2 & u_1u_2  \\ u_1u_2 & u_2^2 
        \end{pmatrix} + \begin{pmatrix}
            g_{11}^2  & g_{12}^2  \\ g_{21}^2 & g_{22}^2
        \end{pmatrix} - I  \right) \circ h + \begin{pmatrix}
            h_{11}^3  & h_{12}^3  \\ h_{21}^3 & h_{22}^3
        \end{pmatrix},
    \end{split}
\end{equation}
By taking second derivative, the Hessian can be written as:
\begin{equation}
    \begin{split}
        \nabla^2_u \mathcal{L} &= \begin{pmatrix}
            3u_1 + u_2^2 & 2u_1u_2 \\ 2u_1u_2 & u_1^2 + 3u_2^2 
        \end{pmatrix} + \begin{pmatrix}
            g_{11}^2-h_{11}^2  & \frac{1}{2}(g_{12}^2 + g_{21}^2 - h_{12}^2-h_{21}^2)  \\ \frac{1}{2}(g_{12}^2 + g_{21}^2 - h_{12}^2-h_{21}^2)  & g_{22}^2 - h_{22}^2 
        \end{pmatrix} - \frac{1}{2}(I+I^T)  \\
        \nabla_g\nabla_u \mathcal{L} &= \begin{pmatrix}
            2g_{11}u_1 & g_{12}u_2 & g_{21}u_2 &0 \\ 
            0 & g_{12}u_1 & g_{21}u_1 & 2g_{22}u_2 
        \end{pmatrix} \\ 
        \nabla_h\nabla_u \mathcal{L} &= \begin{pmatrix}
            -2h_{11}u_1 & -h_{12}u_2 & -h_{21}u_2 &0 \\ 
            0 & -h_{12}u_2 & -2h_{21}u_1 & -2h_{22}u_2 
        \end{pmatrix} \\ 
        \nabla_g^2 \mathcal{L} &= \begin{pmatrix}
            u_1^2 + g_{11}^2 - h_{11}^2 - I_{11}  & 0 & 0 &0 \\ 
            0 & u_1u_2 + g_{12}^2-h_{12}^2  - I_{12}& 0 & 0 \\ 
            0 & 0 & u_1u_2 + g_{21}^2- h_{21}^2   -I_{21} &0 \\ 
            0 &0 &0 & u_2^2 + g_{22}^2 - h_{22}^2  - I_{22}
        \end{pmatrix}  + 2 \mathrm{diag}(g^2) \\
        \nabla_h^2 \mathcal{L} &= -\begin{pmatrix}
            u_1^2 + g_{11}^2 - h_{11}^2 - I_{11}  & 0 & 0 &0 \\ 
            0 & u_1u_2 + g_{12}^2-h_{12}^2  - I_{12}& 0 & 0 \\ 
            0 & 0 & u_1u_2 + g_{21}^2- h_{21}^2   -I_{21} &0 \\ 
            0 &0 &0 & u_2^2 + g_{22}^2 - h_{22}^2  - I_{22}
        \end{pmatrix} + 2\mathrm{diag}(h^2) \\
                \nabla_g\nabla_h \mathcal{L} &= - 2\begin{pmatrix}
            h_{11}g_{11} & 0 & 0 &0 \\ 0 & h_{12}g_{12} &0 &0 \\ 0&0& h_{21}g_{21} &0 \\  0& 0 & 0 & h_{22}g_{22} 
        \end{pmatrix}  = - 2\mathrm{diag}(h\circ g) \\ 
        \nabla_h\nabla_g \mathcal{L} &= - 2\begin{pmatrix}
            h_{11}g_{11} & 0 & 0 &0 \\ 0 & h_{12}g_{12} &0 &0 \\ 0&0& h_{21}g_{21} &0 \\  0& 0 & 0 & h_{22}g_{22} 
        \end{pmatrix}  = - 2\mathrm{diag}(h\circ g) \\
                \nabla_u\nabla_g \mathcal{L} &= \begin{pmatrix}
            2u_1 g_{11} & 0 \\ 
            u_2 g_{12} & u_1 g_{12} \\ 
            u_2 g_{21} & u_1 g_{21} \\ 
            0 & 2u_2 g_{22} 
        \end{pmatrix} \\
        \nabla_u\nabla_h \mathcal{L}&= \begin{pmatrix}
            -2u_1 h_{11} & 0 \\ 
            -u_2 h_{12} & -u_1 h_{12} \\ 
            -u_2 h_{21} & -u_1 h_{21} \\ 
            0 & -2u_2 h_{22} 
        \end{pmatrix} 
    \end{split}
\end{equation}

We then consider the Hessian matrix at critical points. By lemma \ref{Lem:Critical Points}, we have $g=h=0$ and $r\neq 0$. Thus, at this point, the Hessian matrix only has diagonal terms and all the remaining crossing terms are zero. Then, the Hessian matrix becomes:

\begin{equation}
    \begin{split}
        \nabla^2_{u}\mathcal{L} &=\begin{pmatrix}
            3u_1 + u_2^2 & 2u_1u_2 \\ 2u_1u_2 & u_1^2 + 3u_2^2 
        \end{pmatrix}- \frac{1}{2}(I+I^T) \\ 
        \nabla_g\nabla_u \mathcal{L} & = \begin{pmatrix}
            0 & 0 &0 &0 \\ 0 & 0& 0 & 0 \\ 
        \end{pmatrix} \\
        \nabla_h\nabla_u \mathcal{L} & = \begin{pmatrix}
            0 & 0 &0 &0 \\ 0 & 0& 0 & 0 \\ 
        \end{pmatrix} \\
        \nabla^2_g \mathcal{L} &= \mathrm{diag}(r) \\
        \nabla^2_h \mathcal{L} &= -\mathrm{diag}(r) \\
        \nabla_h\nabla_g \mathcal{L}&= 0_{4\times 4} \\
        \nabla^2_h \mathcal{L} &= 0_{4\times 4}\\
        \nabla_u\nabla_g \mathcal{L} & = \begin{pmatrix}
            0 & 0 \\ 0 &0 \\ 0 & 0 \\  0 & 0 \\ 
        \end{pmatrix} \\
        \nabla_u\nabla_h \mathcal{L} & = \begin{pmatrix}
            0 & 0 \\ 0 &0 \\ 0 & 0 \\ 0 & 0, \\ 
        \end{pmatrix} \\
    \end{split}
\end{equation}
where $r = uu^T + g\circ g - h \circ h - I$ stands for the residual.

We then construct the direction $d$ at critical points. By Lemma \ref{Lem:Critical Points}, there exists an index $i\in \{1,2\}$ such that $g^i = h^i = 0$ and $r^i \neq 0$. Hence, only the following two cases can appears:
\begin{itemize}
    \item $r^i>0$, we choose $d_u = [0,0], d_g=[0,0]$ and $d_h =[1,0](i=1)$ or $d_h=[0,1](i=2)$. At this direction, the corresponding quadratic form of Hessian is :
    \begin{equation}
        d^T \nabla^2 \mathcal{L}' d  = -r^i < 0.
    \end{equation}
    \item $r^i<0$, we choose $d_u = [0,0], d_h=[0,0]$ and $d_g =[1,0](i=1)$ or $d_g=[0,1](i=2)$. At this direction, the quadratic form of Hessian with respect to the critical points is :
    \begin{equation}
        d^T \nabla^2 \mathcal{L}' d  = r^i < 0.
    \end{equation}    
\end{itemize}

\paragraph{General rank-$r$ case}
We then deal with the general problem \eqref{Eq:Simplified model}. It can be written as the following equivalent form
\begin{equation}
            \min_{\{u_i\}_{i=1}^r ,g,h}\mathcal{L} =  \frac{1}{2}\norm{\sum_{i=1}^r u_i u_i^T  + g\circ g - h \circ h -I}_F^2,
\end{equation}
 where $u_i$ denotes the $i$-th column of the matrix $U$.

By Lemma \ref{Lem:Critical Points}, there exists an index $i$ such that $r^i \neq 0$ and $g^i=h^i= 0$ holds at critical points. And the corresponding gradient can be written as:
\begin{equation}
    \begin{split}
        \nabla_{u_i} \mathcal{L} &= (u_iu_i^T + \sum_{j\neq i}^r u_j u_j^T - M)u_i + (u_iu_i^T + \sum_{j\neq i}^r u_j u_j^T - M^T)u_i \\
        \nabla_g \mathcal{L} &= 2 \mathrm{diag}\left(\sum_{j=1}^r u_iu_i^T -M\right) \circ g \\
        \nabla_h \mathcal{L} &= -2 \mathrm{diag}\left(\sum_{j=1}^r u_iu_i^T -M\right) \circ h,
    \end{split}
\end{equation}
where $i=1,\cdots, r$ and we again introduce the symbol $M \equiv - (g\circ g + h\circ h -I)$ for brevity.

The Hessian can not be written in a compact matrix form, we then show it in the component form. The first part is the diagonal part about $u_i$:
\begin{equation}
    \begin{split}
        \nabla_{u_i}(\nabla_{u_i}\mathcal{L}) &= \nabla_{u_i} \left( (u_iu_i^T + \sum_{j\neq i}^r u_j u_j^T - M)u_i + (u_iu_i^T + \sum_{j\neq i}^r u_j u_j^T - M^T)u_i \right) \\
        & = \norm{u_i}^2 I +  2 u_i u_i ^T + \sum_{j\neq i}^r u_j u_j^T - M + \norm{u_i}^2 I +  2 u_i u_i ^T + \sum_{j\neq i}^r u_j u_j^T - M^T \\
        & = 2 \norm{u_i}^2 I + 4 u_i u_i^T + 2 \sum_{j\neq i}^r u_j u_j^T - (M + M^T),
    \end{split}
\end{equation}
where $i=1,\cdots r$.

The second is about the crossing terms:
\begin{equation}
    \begin{split}
        \nabla_{u_j}\nabla_{u_i} \mathcal{L} &= \nabla_{u_j}\left( (u_ju_j^T + \sum_{k\neq j}u_k u_k^T - M)u_i + (u_ju_j^T + \sum_{k\neq j}u_k u_k^T - M^T)u_i \right) \\ 
        & = 2u_j^T u_i I + 2u_j u_i^T. 
    \end{split}
\end{equation}
where $i,j=1,\cdots r$.

The crossing terms about $g,h$ and $u$ is not easy to represent in the matrix form, we thus show it in the element-wise form:
\begin{equation}
    \begin{split}
        &\nabla_g \nabla_{u_i} \mathcal{L} = \nabla_g (-M u_i - M^T u_i) \\
        \Rightarrow &\nabla_{g_{kl}}(\partial_{u_i}\mathcal{L})_j = \nabla_{g_{kl}} \sum_{m}g^2_{jm}(u_i)_m + g^2_{mj}(u_i)_m,  \\
    \end{split}
\end{equation}
where $k,l = 1,\cdots ,n$ and $i=1,\cdots ,r$. Finally, we have 
\begin{equation}
    \begin{split}
        \nabla_{g_{kl}}(\partial_{u_i}\mathcal{L})_j = 2\sum_m \delta_{jk}\delta_{ml} (u_i)_m g_{jm} + \delta_{mk}\delta_{jl} g_{mj} (u_i)_m,
    \end{split}
\end{equation}
where $\delta_{jk}$ is the Kronecker delta function which takes the value $1$ only for $j=k$ and zero otherwise, $k,l = 1,\cdots ,n$ and $i=1,\cdots ,r$.

Similarly, we have 
\begin{equation}
    \begin{split}
        &\nabla_h \nabla_{u_i} \mathcal{L} = \nabla_h (-M u_i - M^T u_i) \\
        \Rightarrow &\nabla_{h_{kl}}(\partial_{u_i}\mathcal{L})_j = \nabla_{h_{kl}} \sum_{m}-h^2_{jm}(u_i)_m  - h^2_{mj}(u_i)_m  \\
    \end{split}
\end{equation}
and 
\begin{equation}
    \begin{split}
        \nabla_{h_{kl}}(\partial_{u_i}\mathcal{L})_j = -2\sum_m \delta_{jk}\delta_{ml} (u_i)_m h_{jm} + \delta_{mk}\delta_{jl} h_{mj} (u_i)_m,
    \end{split}
\end{equation}
where $k,l = 1,\cdots ,n$ and $i=1,\cdots ,r$.

The diagonal terms with respect to $g$ and $h$ follows similarly:
\begin{equation}
    \begin{split}
        \nabla_g^2 \mathcal{L} &= \nabla_g \left(2 \mathrm{diag}\left(\sum_{i=1}^r u_iu_i^T - M\right) \circ g \right) \\
        & = 2\nabla_g (\mathrm{diag} (g\circ g) \circ g) + 2 \mathrm{diag}\left(\sum_{i=1}^r u_iu_i^T - M \right) \\ 
        & = 2 \mathrm{diag}\left(\sum_{i=1}^r u_iu_i^T - M + 2 g^2 \right),
    \end{split}
\end{equation}
and 
\begin{equation}
    \begin{split}
        \nabla_h^2 \mathcal{L} &= \nabla_h \left(- 2 \mathrm{diag}\left(\sum_{i=1}^r u_iu_i^T - M\right) \circ h\right) \\
        & = \nabla_h (-2 (\mathrm{diag} -(h\circ h)) \circ h) - 2 \mathrm{diag}\left(\sum_{i=1}^r u_iu_i^T - M \right) \\ 
        & = 2 \mathrm{diag}\left(-\left(\sum_{i=1}^r u_iu_i^T - M\right) + 2 h^2\right).
    \end{split}
\end{equation}

The corssing terms can be calculated directly:
\begin{equation}
    \begin{split}
        \nabla_g \nabla_h \mathcal{L} &= \nabla_g \left(-2 \left(\sum_{i=1}^r u_i u_i^T -M \right)\circ h \right)  \\ 
        & = \nabla_g (-2 (g\circ g) \circ h) \\ 
        & = -4 \mathrm{diag} (g\circ h ),
    \end{split}
\end{equation}
and 
\begin{equation}
    \begin{split}
        \nabla_h \nabla_g \mathcal{L} &= \nabla_h \left(2 \left(\sum_{i=1}^r u_i u_i^T -M \right)\circ g \right)  \\ 
        & = \nabla_g (-2 (h\circ h) \circ g) \\ 
        & = -4 \mathrm{diag} (h\circ g).
    \end{split}
\end{equation}

We then construct the direction satisfying $d^T \nabla^2 \mathcal{L}' d <0$, and the construction is similar to the rank-$2$ case.

By Lemma \ref{Lem:Critical Points}, there exists an index $n\in \{1,2,\cdots n\}$ such that $g^n = h^n = 0$ and $r^n \neq 0$. Hence, only the following two cases can appears:
\begin{itemize}
    \item $r^n>0$, we choose $d_{u_i} = [0,0,\cdots,0] (i=1,\cdots, r), d_g=[0,0,\cdots,0]$ and $d_h =[0,\dots,d^n_h = 1,\dots,0]$. At this direction, the corresponding quadratic form of Hessian is :
    \begin{equation}
        d^T \nabla^2 \mathcal{L} d = - 2r^n <0.
    \end{equation}
    \item $r^n<0$, we choose $d_{u_i} = [0,0,\cdots,0] (i=1,\cdots, r), d_g =[0,\dots,d^n_g = 1,\dots,0]$ and $d_h =[0,\cdots,0]$. At this direction, the corresponding quadratic form of Hessian is :
    \begin{equation}
        d^T \nabla^2 \mathcal{L} d = 2r^n <0.
    \end{equation}    
\end{itemize}

We now complete the proof.

\subsection{IV. Implementation details}\label{sec:S4}
We use pytorch \cite{PyTorch} to implement our algorithm. The correlation part (clean spectra) is parameterized via U-Net \cite{Ronneberger2015} equipped with batch normalization (BN) \cite{Ioffe2015} every layer. In addition, the activation functions is chosen to be leaky-ReLU (lReLU) whose parameter is set to $0.2$. As mentioned in the main text, AdamW optimizer \cite{Loshchilov2018} is used for respecting the curvature of landscape and stable training and its iteration scheme with respect to $\theta$ is listed in Eq. \eqref{Eq:AdamW}.
\begin{equation}
    \begin{split}
        h_{\theta, t} &= \nabla_\theta \mathcal{L}(\theta_{t-1}) \\
        v_{\theta, t} &= \beta_1 v_{\theta,t-1} + (1-\beta_1) h_{\theta, t} \\
        s_{\theta, t} &= \beta_2 s_{\theta, t-1} + (1-\beta_2) h_{\theta, t}^2 \\ 
        \hat{v}_{\theta, t} &= v_{\theta, t}/(1-\beta_1^t) \\
        \hat{s}_{\theta, t} &= s_{\theta, t}/(1-\beta_2^t) \\
        \theta_{t+1} &= \theta_t - \eta_a \left( \frac{\hat{v}_{\theta t}}{\sqrt{\hat{s}_{\theta t}} + \epsilon} + \lambda \theta_t \right),
    \end{split}
    \label{Eq:AdamW}
\end{equation}
where we choose $\beta_1 = 0.9,\beta_2 = 0.999, \epsilon = 1e-8$ and $\lambda = 0.02$ for training neural network as the parameters suggested in \cite{Loshchilov2018}. The small noise networks noted by $g$ and $h$ are optimized with the same optimizer and parameter, and the corresponding can be obtained by replacing $\theta$ to $g,h$ respectively. In practice, we find that this network are stable to train without applying weight decay.
\begin{table}[ht]
\caption{Neural network architecture}
\label{Tab:arch}
\setlength\tabcolsep{7pt}
\begin{tabular}{ccc}
  \hline\noalign{\smallskip}
  Encoder network\\
  \noalign{\smallskip}\hline\noalign{\smallskip}
  Input spectra $x$ \\
  Conv2d, BN, $3 \times 3 \times 128$, stride=2, lReLU\\
  Conv2d, BN, $3 \times 3 \times 128$, stride=2, lReLU\\
  Conv2d, BN, $3 \times 3 \times 128$, stride=2, lReLU \\
  Conv2d, BN, $3 \times 3 \times 128$, stride=2, lReLU \\
  \noalign{\smallskip}\hline
  Decoder network\\
  \noalign{\smallskip}\hline
  Conv2d, BN, $3 \times 3 \times 128$, stride=1, lReLU \\
  Upsampling, Conv2d, BN, $3 \times 3 \times 128$, stride=1, lReLU \\
  Upsampling, Conv2d, BN, $3 \times 3 \times 64$, stride=1, lReLU \\   
  Upsampling, Conv2d, BN, $3 \times 3 \times 32$, stride=1, lReLU \\
  Conv2d, $1 \times 1 \times 1$, stride=1, sigmoid\\
  \noalign{\smallskip}\hline
\end{tabular}
\end{table}

\end{document}